\theoremstyle{plain}
\newtheorem{theorem}{Theorem}
\newtheorem{lemma}[theorem]{Lemma}
\newtheorem{corollary}[theorem]{Corollary}
\theoremstyle{definition}
\newtheorem{definition}[theorem]{Definition}
\newtheorem{example}[theorem]{Example}
\theoremstyle{remark}
\newtheorem{question}{Question}
\newcommand{\R}{\mathbb{R}}  %The set of real numbers
\newcommand{\Z}{\mathbb{Z}}  %The set of integer numbers
\newcommand{\N}{\mathbb{N}}  %The set of natural numbers
\DeclareMathOperator{\conv}{conv} %Convex hull
\DeclareMathOperator{\ext}{ext} %The set of extreme points
\DeclareMathOperator{\xc}{xc} % extension complexity
\DeclareMathOperator{\rc}{rc} % rectangle covering number
\newcommand{\eps}{\varepsilon}
\DeclareMathOperator{\M}{\eps} % perturbation
\DeclareMathOperator{\CP}{CP} % cyclic perturbation
\DeclareMathOperator{\ES}{S} % encoding size
\DeclareMathOperator{\Faces}{\mathcal F} % the~set of faces
\DeclareMathOperator{\Lat}{\mathcal L} % face lattice
\newcommand{\BQP}{\textup{BQP}} % Boolean quadratic polytope (Correlation polytope)
\newcommand{\CBQP}{\textup{CBQP}} % Cyclic perturbation of Boolean quadratic polytope
\title{Complexity of LP in Terms of the~Face Lattice}
\author{Aleksandr Maksimenko}
\address{Laboratory of Discrete and Computational Geometry, P.G. Demidov Yaroslavl State University, ul. Sovetskaya 14, Yaroslavl 150000, Russia} 
\email{maximenko.a.n@gmail.com}
\thanks{Supported by the~project No. 477 of P.\,G.~Demidov Yaroslavl State University within State Assignment for~Research.}
\begin{document}

\begin{abstract}
Let $X$ be a~finite set in $\Z^d$.
We consider the~problem of optimizing linear function $f(x) = c^T x$ on $X$, 
where $c\in\Z^d$ is an~input vector.
We call it a~\emph{problem} $X$.
A problem $X$ is related with linear program $\max\limits_{x \in P} f(x)$, 
where polytope $P$ is a~convex hull of $X$.
The~key parameters for evaluating the~complexity of a~problem $X$ are
the~dimension $d$, the~cardinality $|X|$, 
and the~encoding size $\ES(X) = \log_2 \left(\max\limits_{x\in X} \|x\|_{\infty}\right)$.
We show that if the~(time and space) complexity of some algorithm $A$ for solving a~problem $X$
is defined \emph{only} in terms of combinatorial structure of $P$ and the~size $\ES(X)$, 
then for every $d$ and $n$ there exists polynomially (in $d$, $\log n$, and $\ES$) solvable problem $Y$ with $\dim Y = d$, $|Y| = n$, 
such that the~algorithm $A$ requires exponential time or space for solving $Y$.
\end{abstract}

\maketitle

\section{Introduction}

In many cases a~combinatorial optimization problem can be stated in the~following form.

\medskip

%\noindent
\textsc{Given} a~finite set of feasible solutions $X \subset \Z^d$,
and a~linear function $f(x) = c^T x$, $c \in \Z^d$, $x \in X$.

%\noindent
\textsc{Find} the~maximum (minimum) value of $f(x)$.

\medskip

We will call it a~\emph{problem} $X$, assuming an~arbitrary choice of the~input vector $c \in \Z^d$.
For example, in the~travelling salesman problem the~set $X$, $X\subset \{0,1\}^{E}$ is
the~set of characteristic vectors of hamiltonian circuits in a~graph $G = (V, E)$.

A problem $X$ is related with linear program (LP)
\begin{equation*}
\max_{x \in P} c^T x, \qquad \text{where } P = \conv X .
\end{equation*}
This is the~main reason of interest to such geometric statement 
of a~combinatorial optimization problem.

It is clear that the~complexity of a~problem $X$ may depend on the~encoding size%
\footnote{This is the~natural requirement for the~modern digital devices.} 
\begin{equation}
\label{eq:size}
\ES(X,c) = \log_2 \left(\max_{x\in X\cup\{c\}} \|x\|_{\infty}\right).
\end{equation}
But $\ES(X,c)$ does not reflect the~structural complexity of $X$.

So, it is natural to consider some combinatorial characteristics
of $P = \conv X$ as characteristcs of complexity of a~problem $X$.
The simplest examples are the~dimension of $P$, the~number of its vertices, 
and the~number of its facets.
Nontrivial examples are the~diameter of the~graph (1-ske\-le\-ton) of $P$, 
the~clique number of the~graph, and the~rectangle covering number 
of the~ver\-tex-fa\-cet (non)incidence matrix.

The diameter of the~graph of $P$ was considered as the~lower bound
of complexity of a~problem\footnote{$X$ is the~set of extreme points of $P$} $X = \ext P$ in the~class of sim\-plex-ty\-pe algorithms.
The weakness of this bound is illustrated by the~following well known example.
For any (arbitrary complicated) polytope $P$ 
one can consider a~pyramid $Q$ with $P$ as a~base.
It is obvious that the~problem $Y = \ext Q$ is not simpler 
than the~problem $X = \ext P$, but the~graph diameter of a~pyramid 
is not greater than 2.

In~1980's, V.\,A.~Bondarenko introduced the~concept of so-cal\-led direct type
algorithms~\cite{Bondarenko:1995, Bondarenko:2008}.
The main idea is that the~clique number of the~graph of $\conv X$ is the~lower bound
on the~complexity of the~appropriate problem $X$ in this class of algorithms.
We~discuss this theory and its limitations in the~section~\ref{sec:direct-type}.
In~particular, we show that there is no an~algorithm whose complexity 
for solving a~problem $X$ is expressed only 
in the~clique number of the~graph of $\conv X$ and in the~encoding size $\ES(X,c)$.

Polytope $Q$ is called an~\emph{extension} (or extended formulation) of a~polytope $P$ 
if there is a~linear projection $\pi$ with $\pi(Q) = P$.
In this context, the~number of facets of a~polytope is frequently called a~\emph{size} of a~polytope.
It is well known that the~size of an~extension $Q$ may be significantly less than the~size of its projection $P$.
On the~other hand the~problem $\max\limits_{x\in P} c^T x$ is easily reduced to the~problem $\max\limits_{y\in Q} b^T y$.
Thus in some cases it is usefull to express a~polytope $P$ via its extension.
The minimum size of an~extension of a~polytope $P$ is called \emph{extension complexity} of $P$.

In the~end of~1980's, M.~Yannakakis in his seminal paper~\cite{Yannakakis:1991} on extended formulations 
noticed that the~extension complexity is bounded from below 
by the~rectangle covering number of the~ver\-tex-fa\-cet (non)incidence matrix.
Several breakthrough results was obtained in this direction over the~past three years 
(see \cite{Fiorini:2012}, \cite{Fiorini:2013}, \cite{Rothvoss:2014}).
All of them suggest that there may exist an~algorithm whose complexity for solving
a problem $X$ is equal to big O (or some polynomial) of the~rectangle covering number.
In~the~section~\ref{sec:rectcover} we enumerate this facts and show 
that there is no such algorithm.

Our main result is presented in the~section~\ref{sec:main}.
Let the~function $f$ takes each problem $X$ to $\N$.
We assume that $f$ is defined only in terms of the~face lattice of $\conv X$ and the~encoding size $\ES(X,c)$,
and $f$ is monotone in $\ES(X,c)$ and the~face lattice (by embedding).
(I.e. the~function $f$ is a~monotone combinatorial characteristic of complexity of $X$.)
Then there are an~exponentially solvable problem $Y$ and a~polynomially solvable problem $Z$
such that $f(Y) \le f(Z)$.

%%%%%%%%%%%%%%%%%%%%%%%%%%%%%%%%%%%%%%
%
%   Direct-type algorithms
%
%%%%%%%%%%%%%%%%%%%%%%%%%%%%%%%%%%%%%%

\section{Direct Type Algorithms}
\label{sec:direct-type}

\subsection{Introduction to the~theory}
The information in this subsection is not crucial for the~rest of the~paper,
but it seems that there is no description of the~theory of direct type algorithms in English.
So we have to say ``a couple of words'' about this interesting theory~\cite{Bondarenko:1995, Bondarenko:2008}.

First of all we should say that direct type algorithms are \emph{linear search algorithms} (LSAs).
When dealing with LSAs, one takes into account only time necessary for branchings of the~form
 ``if $f(c) >  0$ then goto $\alpha$, elese goto $\beta$''~\cite{Meyer:1984}.
Here $c \in \R^d$ is the~input vector of a~problem $X$ 
 and $f(c) = a^T c + b$ is an~affine function, 
 where $a \in \R^d$, $b \in \R$.
It is convenient to imagine the~structure of an~LSA as a~\emph{linear decision tree} (LDT)
 with tests ``$f(c) >  0$'' in internal nodes.
Every terminal node (leaf) of such tree has some label $x \in X$.
(One label $x \in X$ can be assigned to more than one leaf.)
 
The \emph{complexity $C_{\text{LSA}}(X)$ of a~problem $X$}
 is the~minimum depth of an~LDT for $X$.
Clearly, $C_{\text{LSA}}(X) \ge \log |X|$
 (provided that for every $x \in X$ there exists an~input $c$ 
 s.t. $x$ is the~optimal solution).
In \cite{Moshkov:1982} (see also \cite{Moshkov:2005}) the~upper bound $O(d^3 \log |X|)$
 have been found for $C_{\text{LSA}}(X)$.
But such LSA can occupy an~exponential space.
So, it would be good to add some natural restrictions to LSA model.

In 1980th, V.A. Bondarenko introduced the~concept of so-called \emph{direct type algorithms} \cite{Bondarenko:1995}.
This concept is based on the~notion of partition of $\R^d$ into cones.
For every $x \in X$ we consider its cone
$$
 K(x) = \{c \in \R^d \mid c^T x  \ge c^T y , \ \forall y \in X\}.
$$
The set of all such cones for given $X$ is called 
 a~\emph{partition of $\R^d$ into cones w.r.t. $X$}.
It is clear, that $K(x)$ is not empty i\textbf{ff} $x$ lies on the~boundary of the~convex hull $\conv X$.
In particular, if $X\subseteq \{0, 1\}^d$ then every $x \in X$ is a~vertex of $\conv X$ 
 and, hence, every $K(x)$ has interior points.
Traditionally, the~convex hull $\conv X$ is called \emph{the~polytope of a~problem $X$}.
Typically, the~set $X$ coincides 
 with the~set of vertices of the~polytope $\conv X$.
Here and below we assume the~latter condition is fulfilled.
I.e., for every $x \in X$ there exists $c \in K(x)$ such that 
 $c^T x  > c^T y $ for any $y \in X$, $y \ne x$.

Two cones $K(x)$ and $K(y)$ are called \emph{adjacent} if 
$$
  \exists c \in K(x)\cap K(y) \ : \ c^T x = c^T y > c^T z
  \quad \forall z \in X, \ z \ne x, \ z \ne y.
$$
It is obvious that $K(x)$ and $K(y)$ are adjacent i\textbf{ff} the~vertices $x$ and $y$
 of the~polytope $\conv X$ are adjacent (i.e., these vertices form a~1-face of the~polytope).
The subset $Y \subseteq X$ is called \emph{a clique in $X$} if every pair $\{x, y\} \subseteq Y$
 is adjacent.

Let $T$ is a~linear decision tree for a~problem $X$ and
 $f$ is some internal node of $T$.
Let $L(f)$ is the~set of leaves of $T$ that are descendants of $f$.
We denote by $X_f$, $X_f \subseteq X$, the~set of labels of leaves $L(f)$.
 %descendants VS ancestors
The set $L(f)$ is divided into two parts $L^+(f)$ and $L^-(f)$
 by two arcs outgoing from $f$.
Let $X^+_f$ and $X^-_f$ denote all the~labels of leaves $L^+(f)$ and $L^-(f)$, respectively.

%V.A. Bondarenko introduced two slightly different definitions of direct type algorithm
% in terms of LDTs.
 
\begin{definition}[\cite{Bondarenko:1995}] 
\label{defDTA}
A linear decision tree for solving a~problem $X$ is called \emph{direct type tree}
 if for any internal node $f$ and for any clique $Y \subseteq X$ the~following 
 inequality holds:
$$
 |X_f \cap Y| - 1 \le \max\{|X^+_f \cap Y|, |X^-_f \cap Y|\}.
$$ 
\end{definition}

The \emph{complexity $C_{\text{DTT}}(X)$ of $X$} is the~minimum depth 
 of a~direct type tree for a~problem $X$.

Let $\omega(X)$ is the~size of the~maximum clique in $X$.
I.e. $\omega(X)$ is the~clique number of the~1-skeleton of $\conv X$.
It is not difficult to see that
\begin{equation*}
%\label{eqMaks}
  C_{\text{DTT}}(X) \ge \omega(X) - 1.
\end{equation*}

It is known \cite{Bondarenko:1995, Bondarenko:2008}, that 
 sorting algorithms, 
 greedy algorithms for the~minimum spanning tree, 
 Dijkstra's algorithm for the~shortest path in a~graph,
 Held-Karp algorithm and branch-and-bound algorithms for the~travelling salesman problem,
 and some other combinatorial algorithms are direct type algorithms.
On the~other hand, clique numbers $\omega(X)$ are superpolynomial
 for such NP-hard problems as the~travelling salesman, the~knapsack, 
 the~3-satisfiability, the~3-assignment, the~maximum cut, %the~Steiner tree in a~graph,
 the~set covering, the~set packing and many others \cite{Bondarenko:1995, Maksimenko:2013, Maksimenko:2014}.
Whereas $\omega(X)$ are polynomial for polynomially solvable problems: 
 the~sorting, the~minimum spanning tree, the~short path in a~graph, the~min-cut problem
 \cite{Bondarenko:1995, Nikolaev:2013}.

Nonetheless, there are examples of polynomially solvable problems $X$ 
with exponential $\omega(X)$ \cite{Bondarenko:1995}.
A generalization of one such example is considered in the~following subsection.

It is also natural to ask the~following question.

\begin{question}[V. Kaibel]
Is there some (NP-)hard problems with small $\omega(X)$?
%In other words, can $\omega(X)$ be the~adequate complexity characteristic 
% of the~corresponding problem?
\end{question}

%%%%%%%%%%%%%%%%%%%%%%%%%%%%%%%%%%%%%%%%%%%%%%%%%%%%%%%%
%
%   Direct type: Examples
%
%%%%%%%%%%%%%%%%%%%%%%%%%%%%%%%%%%%%%%%%%%%%%%%%%%%%%%%%

\subsection{Examples}

The first example is related with the~famous cyclic polytopes.
(More detailed information on cyclic polytopes is presented in~\cite{Grunbaum:2003}.)
Let us consider a~monotone function
\begin{equation}
\label{eq:sequence}
g:\ \N \to \N, \qquad g(i) < g(i+1) \quad \forall i\in\N,
\end{equation}
and the~set 
\begin{equation}
\label{eq:cyclic}
C(d,N,g) = \left\{(t_i, t_i^2, \dots, t_i^d)\in\N^d \mid t_i = g(i),\ i\in[N]\right\},
\end{equation}
where $d,N\in\N$.
The convex hull of $C(d,N,g)$ is a~cyclic polytope.
It is well known that $d$-di\-men\-sio\-nal cyclic polytope is a~simplicial
$\lfloor d/2 \rfloor$-neigh\-borly polytope and it has the~maximum number of faces
among all convex polytopes with the~same number $N$ of vertices~\cite{Grunbaum:2003}.
In particular, the~clique number $\omega(C(d,N,g)) = N$ for $d \ge 4$.

\begin{theorem}
\label{th:cyclic}
A problem $C(d,N,g)$ is solvable polynomially in $d$ and $\log N$ whenever
the~function $g$ is polynomially countable 
and the~encoding size of an~input vector $c$
is polynomial in $d$ and $\log N$.%bounded by $O(g(N))$ for given $N$.
In particular, if $g(i) = i$, $i\in\N$ and $\|c\|_{\infty} = O(N)$, 
then the~solving of a~problem $C(d,N,g)$ requires $O(d^4 \log^2 N)$ time and $O(d \log N)$ space.
\end{theorem}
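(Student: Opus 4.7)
The plan is to reduce $C(d,N,g)$ to maximizing the single-variable polynomial $P(t) = \sum_{k=1}^d c_k t^k$ over $t\in\{g(1),\dots,g(N)\}$ and then to exploit that $P$ has degree $d$. Since $P'$ has degree $d-1$, $P$ is monotone on at most $d$ real intervals $I_1<\dots<I_m$, $m\le d$; and since $g$ is strictly increasing, the preimages $J_j=\{i : g(i)\in I_j\}$ are $m$ consecutive blocks of indices in $[N]$, on each of which $P\circ g$ is monotone. Hence the optimum of $P\circ g$ is attained at the left or right endpoint of one of these blocks, reducing the problem to (a) locating the $m-1\le d-1$ block boundaries and (b) evaluating $P\circ g$ at the resulting $\le 2d$ endpoints and returning the largest value.

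For (a), I would detect the sign alternations of the sequence $\sigma_i := \operatorname{sign}\bigl(P(g(i+1))-P(g(i))\bigr)$; by the piecewise monotonicity just noted, this sequence has at most $d-1$ alternations. A divide-and-conquer bisection can find them using $O(d\log N)$ sign queries: maintain a stack of subranges suspected of containing an alternation; for each one, probe the midpoint and descend into whichever half (or both) whose boundary signs still indicate an alternation, using sentinel signs at the ends of $[1,N]$ so the extremes are controlled. Each sign query is a comparison of two Horner evaluations of $P$ at values of $g$, and by hypothesis both $g$ and the integer arithmetic involved are polynomial-time in $d$, $\log N$, and $\ES(X,c)$. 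Step (b) is then a constant number of evaluations per block.

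In the specialization $g(i)=i$, $\|c\|_\infty=O(N)$, we have $|P(i)|\le d\,N^{d+1}$, so any intermediate integer has bit length $O(d\log N)$; a Horner pass is $O(d)$ multiplications of an $O(d\log N)$-bit accumulator by a $\log N$-bit integer, costing $O(d^2\log^2 N)$ bit operations; we perform $O(d\log N)$ such passes; maintaining only the running best candidate keeps storage at $O(d\log N)$ bits. Multiplying these out matches the stated bounds up to constants.

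I expect the principal difficulty to lie in the bisection of (a): a subrange whose boundary signs agree may still contain an even, positive number of sign alternations, so the naive ``descend only when signs differ'' rule can miss alternations. The safeguard is to exploit the global bound $d-1$ and to structure the recursion so that a subrange is only discarded once one has certified, from a bounded number of probes, that it contains no alternation; verifying that such a scheme stays within $O(d\log N)$ queries is the combinatorial crux.
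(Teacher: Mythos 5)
Your reduction to maximizing the univariate polynomial $P(t)=\sum_k c_k t^k$ over $t\in\{g(1),\dots,g(N)\}$, and the observation that $P\circ g$ is monotone on at most $d$ consecutive blocks of indices, are exactly the paper's starting point. But your step (a) — locating the block boundaries by bisecting on the sign sequence $\sigma_i=\operatorname{sign}\bigl(P(g(i+1))-P(g(i))\bigr)$ — has a genuine gap, and it is precisely the one you flag yourself at the end. A subrange whose two boundary probes show equal signs may hide an even number of alternations, and no finite set of sign probes can \emph{certify} that an unprobed gap contains none: the sign data alone carries no monotonicity you can lean on, so the only way to rule out hidden alternations is to probe every index, which costs $\Theta(N)$, not $O(d\log N)$. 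The global bound of $d-1$ alternations does not rescue the scheme, because it tells you how many alternations exist in total but gives you no local stopping rule for discarding a subrange. As written, the ``combinatorial crux'' is not resolved, and the proof is incomplete at its central step.

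The paper closes exactly this hole with a cascade of derivatives, which is the idea your proposal is missing. One starts with $f^{(d-1)}(t)=(d-1)!\,c_{d-1}+d!\,c_d t$, which is \emph{linear}, hence monotone on all of $[N]$, so a single binary search correctly splits $[N]$ into at most two segments of constant sign. On each such segment $f^{(d-2)}$ is monotone (its derivative has constant sign there), so binary search is again \emph{valid} — a monotone sequence changes sign at most once, and equal boundary signs genuinely certify no interior sign change. Iterating down to $f'$ partitions $[N]$ into at most $d$ segments of constant sign of $f'$, i.e.\ the monotonicity blocks you wanted, using $O(d)$ segments per level, $O(\log N)$ probes per segment, and $d-1$ levels. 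Every binary search in this scheme is applied only to a function already known to be monotone on the range being searched, which is the invariant your direct bisection lacks. (Your complexity accounting would also need revisiting: the paper's count is $O(d^2\log N)$ evaluations at $O(d^2\log N)$ each, giving the stated $O(d^4\log^2 N)$, whereas your claimed $O(d\log N)$ probes rests on the unproved bisection.)
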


\begin{proof}
We have to maximize function $c^T x = c_1 x_1 + c_2 x_2 + \dots + c_d x_d$ for $x\in C(d,N,g)$. 
That is we have to maximize the~polynomial
\begin{equation*}
f(t)	= c_1 t + c_2 t^2 + \dots + c_d t^d, \quad \text{where } t = g(i), \ i\in[N].
\end{equation*}
The algorithm of finding the~maximum will consist of $d-1$ steps.

In the~first step we divide the~set $[N]$ into (two or one) segments 
where the~derivative $f^{(d-1)}(t) = (d-1)!\, c_{d-1} + d!\, c_d t$ has constant sign.

In the~second step we consider the~derivative 
\[
f^{(d-2)}(t) = (d-2)!\, c_{d-2} + \frac{(d-1)!}{1!} c_{d-1} t + \frac{d!}{2!} c_d t^2.
\]
It is monotone in every segment found in the~previous step.
Hence it is not difficult to divide every such segment into segments
with constant sign of $f^{(d-2)}(t)$.
It can be done by dichotomic procedure and requires no more than $2\log_2 N$ evaluatings of $f^{(d-2)}(t)$.

In the~following steps, by analogy we eventually partition the~set $[N]$ into 
at most $d$ segments with constant sign of $f'(t)$.
Thus, it remains to find the~value of $f(t)$ at the~ends of these segments and choose the~maximum.

To conclude the~proof it remains to note that the~calculation of $f'(t)$ requires $O(d)$ arithmetic operations
 with $(d \log g(N))$-bit numbers. In particular, it takes $O(d^2 \log N)$ time for the~case $g(i) = i$.
\end{proof}

As a~consequence, for every $k\in\N$ the~problem $C(2k, 2^k, g)$ with $g(i)=i$
is solvable polynomially in $k$, but the~clique number $\omega\bigl( C(2k, 2^k, g) \bigr) = 2^k$
is exponential. Moreover, $\conv C(2k, 2^k, g)$ is $k$-neighborly.
Hence, any its $k$ vertices form a~(simplicial) face.

It turns out that there are also examples of the~opposite nature.
More precisely, there are problems with $\omega(X) = 2$ and arbitrary complexity.

\begin{theorem}
\label{th:simplex}
For any simplex $\Delta_d\subset\R^d$ there is an~extension $Q\subset\R^{d+1}$ with $\omega(Q) = 2$.
\end{theorem}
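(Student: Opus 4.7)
The plan is to construct an explicit extension $Q\subset\R^{d+1}$ of $\Delta_d$ whose 1-skeleton is triangle-free. A preliminary observation is that $Q$ must be $(d+1)$-dimensional: if $\dim Q=d$, then the restriction of the projection $\pi$ to the affine hull of $Q$ is either an affine isomorphism (so $Q$ is affinely equivalent to $\Delta_d$ and $\omega(Q)=d+1$) or it has a nontrivial kernel (forcing $\dim\pi(Q)<d$, incompatible with $\pi(Q)=\Delta_d$). So $Q$ will be a full-dimensional polytope in $\R^{d+1}$.

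For the construction, I fix $\Delta_d=\conv\{v_0,\dots,v_d\}\subset\R^d$ and take $\pi\colon\R^{d+1}\to\R^d$ to be the projection that drops the last coordinate. I then define $Q$ as the convex hull of lifted copies $(v_i,h_i)$ of the simplex vertices at generic distinct heights, augmented by a family of auxiliary vertices whose first $d$ coordinates lie in $\Delta_d$ and whose last coordinates are chosen so that every $2$-face of $Q$ is a polygon with at least four vertices. A concrete realization places auxiliary vertices both above and below the lifted simplex so that no supporting hyperplane of $Q$ can contain all of $(v_0,h_0),\dots,(v_d,h_d)$ simultaneously. Because each $(v_i,h_i)$ projects onto $v_i$ and every vertex of $Q$ lies above $\Delta_d$, the identity $\pi(Q)=\Delta_d$ then follows immediately.

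The main work, and the main obstacle, is the verification of $\omega(Q)=2$. Two kinds of triangles must be excluded from the $1$-skeleton: \emph{facial} triangles, which are three vertices spanning a $2$-face, and \emph{non-facial} triangles, which are three pairwise adjacent vertices lying on three distinct $2$-faces. Facial triangles are ruled out by the explicit choice that every $2$-face of $Q$ have at least four vertices, i.e.\ by listing the supporting hyperplanes of $Q$ and checking that none cuts out exactly a triangular face. For non-facial triangles, the plan is a generic-position argument on the heights $h_i$ and the auxiliary vertex positions: the set of parameter choices for which some triple of vertices becomes pairwise adjacent without spanning a common face is a measure-zero subvariety, so a generic choice avoids it. The technical heart of the proof is the case analysis over the possible types of a candidate triangle --- three lifted simplex vertices, two lifted and one auxiliary, one lifted and two auxiliaries, or three auxiliaries --- and the verification that in each case the generic parameters break one of the three potential adjacencies.
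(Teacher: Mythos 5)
Your write-up is a plan for a construction rather than a construction: the polytope $Q$ is never actually defined. The auxiliary vertices are described only by the property you need them to have (``last coordinates chosen so that every $2$-face of $Q$ is a polygon with at least four vertices''), which is the conclusion restated as a hypothesis. Worse, the mechanism you propose for the remaining triangles cannot work. Generic position is self-defeating here: for a generic choice of the heights $h_i$ and of the auxiliary points, $\conv$ of your point set is a \emph{simplicial} polytope, so every $2$-face is a triangle and $\omega(Q)\ge 3$ as soon as $\dim Q\ge 2$. Triangle-freeness of the graph of a polytope is a highly degenerate property; it lives on a measure-zero set of parameter choices, so you cannot both impose ``every $2$-face has at least four vertices'' (a non-generic condition) and then appeal to genericity to break the non-facial adjacencies. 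The case analysis you defer to --- showing that no triple of vertices is pairwise adjacent --- is the entire content of the theorem, and no idea is offered for carrying it out.

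For comparison, the paper's proof is a concrete construction whose triangle-freeness has an identifiable source, namely the bipartiteness of the graph of the cube. It realizes the simplex as $\R_+^d\cap H$ with $H=\{\sum_i x_i=1\}$, takes $Q^+$ to be (combinatorially) the unit cube truncated by the halfspace $H^+$, restricted to a cylinder over the simplex, and sets $Q=Q^+\cup Q^-$ with $Q^-$ the mirror image of $Q^+$ in $H$. The graph of the cube has no triangles; the only triangles of $Q^+$ are those created by the cut, and each of them lies in $H$ or has an edge in $H$. Gluing the reflected copy destroys the facet lying in $H$, hence destroys every such triangle, and $\omega(Q)=2$. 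If you want to salvage your approach, you would need to replace the genericity argument by an explicit combinatorial description of the faces of your $Q$ --- at which point you are doing what the paper does, and you may as well start from a polytope whose graph is already known to be triangle-free.
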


\begin{proof}
We use the~fact that any two $d$-dimensional simplices are affinely equivalent to each other.
So, we consider only the~``simplest'' one:
\[
\Delta_{d-1} = \{x\in \R_+^d \mid x \in H\},
\]
which is the~intersection of a~nonnegative orthant $\R_+^d$ and a~hyperplane
\[
H = \{x=(x_1, \dots, x_d) \in \R^d \mid x_1 + \dots + x_d = 1\}.
\]
We will construct an~extension $Q \subset \R^d$ such that the~projection of $Q$ 
into the~hyperplane $H$ coincides with $\Delta_{d-1}$.
Therefore, $Q$ is contained in the~cylinder
\[
 Y = \biggl\{x=(x_1, \dots, x_d) \in \R^d \Bigm| (n-1) x_i + 1 \ge \sum_{j\ne i} x_j, \ i\in[d]\biggr\}.
\]

Besides, let $Q$ be symmetric with respect to $H$.
So, we construct only half of $Q$ that lies in $H^+ = \{x \in \R^d \mid x_1 + \dots + x_d \ge 1\}$.
This half of $Q$ we denote by $Q^+$.
Let $Q^+$ be the~intersection of the~cylinder $Y$, 
halfspace $H^+$, and the~unit cube $C_d = \{x\in\R_+^d \mid x \le \mathbf{1}\}$.
It is not difficult to see that the~combinatorial structure of $Q^+$ is
the~combinatorial structure of the~``cube without one vertex'' $H^+ \cap C_d$.
All vertices of $Q^+$ can be divided into $d$ groups according to the~number of coordinates equal to 1.
The first group consists of one vertex $(1, 1, \dots, 1)$.
Every vertex in the~second group has one coordinate $\frac{d-2}{d-1}$ and $d-1$ ones.
Every vertex in the~third group has two coordinates $\frac{d-3}{d-2}$ and $d-2$ ones.
The last group consists of $d$ basis vectors.
% Vertices of $Q^+$:
% one 1 and $n-1$ zeroes,
% two 1's and $n-2$ one half's,
% ...
% $n-2$ ones and two $\frac{n-3}{n-2}$,
% $n-1$ ones and one $\frac{n-2}{n-1}$,
% (1, \dots, 1).

We remark that every triangle in the~graph of the~polytope $Q^+$ is contained in $H$
or has one edge in $H$.
Thus, when we glue $Q^+$ and $Q^-$ all triangles will disappear.
Hence $\omega(Q) = 2$.
\end{proof}

It is obvious that the~linear optimization on the~simplex $\Delta_d$ 
(in the~general case) requires at least $d^2$ operations.
The same is true for the~optimization on its extension $Q$.
Thus, in this example the~dimension of the~problem 
characterizes the~complexity much more accurately, than the~clique number.
Moreover, the~theorem says that there is no an~algorithm whose complexity
for solving a~problem $X$ is expressed only in the~clique number $\omega(X)$ and encoding size of $X$.

%%%%%%%%%%%%%%%%%%%%%%%%%%%%%%%%%%%%%%%%%%%%%%%%%%%%%%%%
% 
%   Rectangle covering number
%
%%%%%%%%%%%%%%%%%%%%%%%%%%%%%%%%%%%%%%%%%%%%%%%%%%%%%%%%

%\section[The Rectangle Covering Number]{\texorpdfstring{The Rectangle Covering Number \\of the~Vertex-Facet Nonincidence Matrix}{The Rectangle Covering Number of the~Vertex-Facet Nonincidence Matrix}}
\section{Rectangle Covering Numbers}
\label{sec:rectcover}

\subsection{Background}

Let $V = \{v_1, \dots, v_n\}$ be the~set of vertices of a~polytope $P$
and $F = \{F_1, \dots, F_k\}$ be the~set of its facets.
%The entry $a_{ij}$, $i\in[n]$, $j\in[k]$, of 
The vertex-facet nonincidence matrix $A=(a_{ij})\in \{0,1\}^{n\times k}$ of $P$
 is defined as follows:
\[
a_{ij} = \begin{cases}
0, & \text{if } v_i \in F_j,\\
1, & \text{otherwise.}
\end{cases}
\]
Let $I\subseteq [n]$, $J\subseteq [k]$.
The set $I\times J$ is called \emph{1-rectangle} in $A$ if $a_{ij} = 1$ for all $i\in I$ and $j\in J$.
A \emph{rectangle covering} of $A$ is the~set of 1-rectangles whose union is equal to union of 1-entries in $A$.
The \emph{rectangle covering number} of $A$ is the~smallest cardinality of a~rectangle covering of $A$.
By following \cite{Fiorini:2013} we denote by $\rc(P)$ the~rectangle covering number of the~vertex-facet nonincidence matrix of $P$.

It is known, that the~rectangle covering number is the~lower bound on an~extension complexity of $P$ \cite{Yannakakis:1991}
(see \cite{Fiorini:2013} for the~current knowledge on this topic).
In particular, if some problem $X$ (more precisely, its convex hull $\conv X$) has a~compact extended formulation,
then $\rc(\conv X)$ is polynomial.
At the~present time, there are known a~lot of polynomially solvable problems with compact extended formulations.
Among them are sorting problems, spanning trees, matchings, cuts, 
approximation case of the~knapsack problem and many others \cite{Conforti:2010}.
Special mention should be the~perfect matching polytope $P_{\text{M}}(n)$.
It has a~polynomial rectangle covering number $\rc(P_{\text{M}}(n)) = O(n^4)$ \cite{Fiorini:2013},
but exponential extension complexity $\xc(P_{\text{M}}(n)) = 2^{\Omega(n)}$ \cite{Rothvoss:2014}.
On the~other hand, the~boolean quadratic polytope (correlation polytope) $\BQP_n$ 
has exponential rectangle covering $\rc(\BQP_n) = 2^{\Omega(n)}$ \cite{Fiorini:2012} 
(see also \cite{Kaibel:2013} for the~best current bound).
Consequence of this result are superpolynomial lower bounds on rectangle covering numbers 
for many other NP-hard problems involving travelling salesman problem, knapsack problem,
satisfiability problems, 3-assignment problem, set covering and set packing problems,
and many others \cite{Fiorini:2012, Maksimenko:2013, Maksimenko:2014}.
These facts let one to conjecture that the~rectangle covering number 
is a~complexity of some algorithm (or class of algorithms) for solving a~problem $X$.
It turns out that this is not true.
We show that there are NP-hard problems with polynomial $\rc$.

Our construction is based on the~fact that $\rc(P)$ of a~simplicial polytope $P$
is equal to $O(d^2 \log n)$~\cite{Fiorini:2013}, where $d = \dim P$ is the~dimension 
and $n = |\ext P|$ is the~number of vertices of $P$.
The main idea is to make a~slight perturbation of vertices of 0/1-polytope
associated with NP-hard problem.

%%%%%%%%%%%%%%%%%%%%%%%%%%%%%%%%%%%%%%%%%%%%%%%%%%%%%%%%
% 
%   Cyclic perturbation
%
%%%%%%%%%%%%%%%%%%%%%%%%%%%%%%%%%%%%%%%%%%%%%%%%%%%%%%%%

\subsection{Cyclic Perturbation}

For every $x\in \{0, 1\}^d$ we define its number 
\[
n(x) = \sum_{i=1}^d 2^{i-1} x_i, \quad 0 \le n(x) \le 2^d - 1.
\]
Let us consider a~map $\M: \{0, 1\}^d \to \N^d$ for transforming $x\in \{0, 1\}^d$ to $\eps \in \N^d$:
\begin{align*}
\eps_1 &= n(x), \\
\eps_2 &= (n(x))^2, \\
       & \dots\\%\dots\dots \\
\eps_d &= (n(x))^d.
\end{align*}
Let 
\[
K = 2^{d^3}
\]
be ``very big'' constant.
Note that for any $x\in \{0, 1\}^d$, the $\|\M(x)\|$ is ``very small'' with respect to $K$:
\[
\frac{\|\M(x)\|}{K} \le \frac{\|\M(x)\|_1}{K} \le \frac{(2^d-1) + (2^d-1)^2 + \dots + (2^d-1)^d}{K} \le \frac{2^{d^2}}{2^{d^3}} 
 = 2^{-d^2(d-1)}.
\]

Let $X\subseteq \{0, 1\}^d$.
The set 
\[
Y = \CP(X) = \bigl\{y\in\Z^d \mid y = K x + \M(x), \ x\in X\bigr\} 
\]
is called \emph{cyclic perturbation} of $X$.
It is clear that after such perturbation the~encoding size of $X$ increases in $\log_2 K = d^3$ times.
Furthermore, $Y = \ext \conv Y$ since the~value $\|\M(x)\|$ of perturbation is ``very small''.

\begin{lemma}
The convex hull of the~cyclic perturbation of $X\subseteq \{0,1\}^d$ is a~simplicial polytope.
\end{lemma}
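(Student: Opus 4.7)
The plan is to show the stronger property that any $d+1$ distinct points of $Y = \CP(X)$ are affinely independent, which forces each facet of $\conv Y$ to have at most $d$ vertices and hence exactly $d$ (every $(d-1)$-polytope has at least $d$ vertices), making $\conv Y$ simplicial. Fix $d+1$ distinct $x_0,\ldots,x_d\in X$ and write $y_i = K x_i + \M(x_i)$; affine independence of $y_0,\ldots,y_d$ is equivalent to non-vanishing of the determinant of the $(d+1)\times(d+1)$ integer matrix $A(K)$ whose $i$th column is $(y_i,1)^T$. Using multilinearity in each column, together with the splitting $(y_i,1)^T = K(x_i,0)^T + (\M(x_i),1)^T$, I would expand
\begin{equation*}
\det A(K) \;=\; \sum_{j=0}^{d} c_j K^j,\qquad c_j \;=\; \sum_{|S|=j} D_S,
\end{equation*}
where $D_S$ is the determinant of the matrix obtained by using $(x_i,0)^T$ in column $i$ for $i\in S$ and $(\M(x_i),1)^T$ otherwise (the term $S=\{0,\ldots,d\}$ vanishes because its last row is zero, so this is a polynomial of degree at most $d$ in $K$ with integer coefficients).

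The heart of the argument is the constant term $c_0 = D_\emptyset$. Its defining matrix has $i$th column $(n(x_i), n(x_i)^2, \ldots, n(x_i)^d, 1)^T$, which after a row permutation is the standard Vandermonde matrix in the values $n(x_0),\ldots,n(x_d)$. Because $n$ is a bijection from $\{0,1\}^d$ onto $\{0,1,\ldots,2^d-1\}$, these values are pairwise distinct integers, so $c_0$ is a nonzero integer bounded by
\begin{equation*}
|c_0| \;=\; \Bigl|\prod_{0\le i<j\le d}\bigl(n(x_j) - n(x_i)\bigr)\Bigr| \;\le\; (2^d)^{\binom{d+1}{2}} \;=\; 2^{(d^3+d^2)/2}.
\end{equation*}

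To finish, I would invoke the integer root theorem: since $p(K) = \sum_j c_j K^j$ has integer coefficients and nonzero constant term, every integer root of $p$ must divide $c_0$ and so be bounded in absolute value by $|c_0| \le 2^{(d^3+d^2)/2}$. For $d\ge 2$ this bound is strictly less than $K = 2^{d^3}$, so $K$ is not a root and $\det A(K)\ne 0$; the case $d\le 1$ is trivial. The main obstacle, in my view, is not the final divisibility step but recognizing $c_0$ as a Vandermonde---this is exactly what the moment-curve design of $\M$ is engineered to deliver, and it also explains the very large choice $K = 2^{d^3}$, which sits comfortably above the a priori size $2^{(d^3+d^2)/2}$ of $|c_0|$.
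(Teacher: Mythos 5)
Your proposal is correct and follows essentially the same route as the paper: split each perturbed point into its $K x$ part and its moment-curve part, expand the $(d+1)\times(d+1)$ determinant by multilinearity, recognize the pure moment-curve term as a nonzero Vandermonde determinant of absolute value less than $K$, and conclude the full determinant is nonzero because every other term is a multiple of $K$ (your integer-root-theorem phrasing is just this same divisibility observation). The only differences are cosmetic: you work with columns and organize the expansion as a polynomial in $K$, while the paper works with rows and sums over subsets directly.
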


\begin{proof}
It is sufficient to prove that any $d+1$ points\footnote{We consider only interesting cases when $|\CP(X)| \ge d+1$. 
The other cases can be proved by analogy.} 
in the~cyclic perturbation 
\[
Y = \CP(X)
\]
are affinely independent.
%Now every point in $Y$ has integer coordinates.

For every $(d+1)$-point set $\{y^1, y^2, \dots, y^{d+1}\} \subseteq Y$ we must check that
\begin{equation}
\label{eq:matrix}
\det\begin{vmatrix}
1 & y^1_1 & y^1_2 & \!\dots & y^1_d \\[2pt]
1 & y^2_1 & y^2_2 & \!\dots & y^2_d \\[2pt]
\hdotsfor[1]{5} \\[2pt]
1 & y^{d+1}_1 & y^{d+1}_2 & \!\dots & y^{d+1}_d
\end{vmatrix} \ne 0.
\end{equation}
Since $y^i = K x^i + \M(x^i)$ for some $x^i \in X$, $i\in[d+1]$,
we may decompose the~matrix \eqref{eq:matrix} into the~sum of two matrices
\[
A = \begin{pmatrix}
0 & K x^1_1 & K x^1_2 & \!\dots & K x^1_d \\[2pt]
0 & K x^2_1 & K x^2_2 & \!\dots & K x^2_d \\[2pt]
\hdotsfor[1]{5} \\[2pt]
0 & K x^{d+1}_1 & K x^{d+1}_2 & \!\dots & K x^{d+1}_d
\end{pmatrix} \in \{0, K\}^{(d+1)\times(d+1)}
\]
and
\[
B = \begin{pmatrix}
1 & n(x^1) & (n(x^1))^2 & \!\dots & (n(x^1))^d \\[2pt]
1 & n(x^2) & (n(x^2))^2 & \!\dots & (n(x^2))^d \\[2pt]
\hdotsfor[1]{5} \\[2pt]
1 & n(x^{d+1}) & (n(x^{d+1}))^2 & \!\dots & (n(x^{d+1}))^d \\[2pt]
\end{pmatrix}.
\]
For every set $S\subseteq[d+1]$ we define $(d+1)\times (d+1)$-matrix $D^S$:
\[
D^S_{ij} = 
\begin{cases}
A_{ij},& \text{if } i\in S,\\
B_{ij},& \text{if } i\not\in S.
\end{cases}
\]
In particular, $D^{[d+1]} = A$, $D^{\emptyset} = B$.
As is known, the~determinant of the~sum of two $n\times n$-matrices can be written 
as the~sum of determinants of $2^n$ matrices:
\begin{equation}
\label{eq:twomatrixsum}
\det(A+B) = \sum_{S\subseteq[d+1]} \det D^S.
\end{equation}
For every nonempty set $S$ the~matrix $D^S$ has at least one row from $A$.
Therefore, $\det D^S$ is a~multiple of $K$.
On the~other hand, $\det D^{\emptyset} = \det B$ is the~Vandermonde determinant:
\[
\det B = \prod_{1\le j < k \le d+1} \bigl(n(x^k) - n(x^j)\bigr) > 0.
\]
Moreover,
\[
\det B = \prod_{1\le j < k \le d+1} \bigl(n(x^k) - n(x^j)\bigr) \le (2^d-1)^{d(d+1)/2} < K.
%\prod_{1\le j < k \le d+1} (t_{i_k} - t_{i_j}) \le (t_N/2)^{d(d+1)/2} < 2^{d^3/2} \quad \text{for } d > 2.
\]
Hence, the~sum \eqref{eq:twomatrixsum} is not equal to 0.
\end{proof}

Since $\rc(P) = O(d^2 \log|\ext P|)$ for any simplicial polytope $P\in \R^d$~\cite{Fiorini:2013}, 
we immediately get

\begin{corollary}
\label{cor:simplicial}
The rectangle covering number $\rc(\conv CP(X)) = O(d^2 \log |X|) = O(d^3)$ for any $X \subseteq\{0,1\}^d$.
\end{corollary}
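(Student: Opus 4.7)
The proof is essentially a direct combination of the preceding lemma with the cited simplicial bound from Fiorini et al., so the plan is short. Let $Y = \CP(X)$ and $P = \conv Y$.

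First, I would invoke the preceding lemma to assert that $P$ is a simplicial polytope of dimension at most $d$. Next, I would recall that the perturbation $\M$ was chosen small enough (as observed immediately after the definition of $\CP$) to guarantee $Y = \ext P$; in particular $|\ext P| = |Y| = |X|$, since the map $x \mapsto K x + \M(x)$ is injective on $\{0,1\}^d$ (the low-order term $\M(x)$ is determined by $n(x)$, and its magnitude is bounded by $K \cdot 2^{-d^2(d-1)}$, so the leading $K x$ term determines $x$ uniquely).

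Then I would apply the result of~\cite{Fiorini:2013} stating that for any simplicial polytope $P \subset \R^d$,
\[
\rc(P) = O(d^2 \log |\ext P|).
\]
Substituting $|\ext P| = |X|$ gives $\rc(P) = O(d^2 \log |X|)$, which is the first equality.

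Finally, since $X \subseteq \{0,1\}^d$ we have $|X| \le 2^d$, so $\log|X| \le d$, yielding $\rc(P) = O(d^3)$. There is no real obstacle here; the only thing worth double-checking is the observation that $|\ext P| = |X|$, which follows from the ``very small perturbation'' remark already made when $\CP(X)$ was introduced.
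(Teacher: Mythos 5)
Your proposal is correct and follows essentially the same route as the paper: the preceding lemma gives simpliciality, the bound $\rc(P) = O(d^2\log|\ext P|)$ from~\cite{Fiorini:2013} is applied, and $|X|\le 2^d$ yields $O(d^3)$. The extra remarks on injectivity of $x\mapsto Kx+\M(x)$ and on $Y=\ext\conv Y$ are harmless elaborations of what the paper already observes when defining the cyclic perturbation.
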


\begin{example}
Let us consider the~cyclic perturbation of vertices of the~boolean quadratic polytope%
\footnote{Sometimes we use the~same notation for the~polytope and its vertices.}
\begin{equation*}
%\label{eq:BQP}
 \BQP_n = \left\{x=(x_{ij})\in\{0,1\}^{n(n+1)/2} \mid 
	        x_{ij} = x_{ii} x_{jj}, \; 1\le i < j\le n\right\}.
\end{equation*}
Obviously, the~encoding size (in any reasonable sense) of $\BQP_n$ is polynomial in $n$.
Hence, the~same is true for its cyclic perturbation $\CBQP_n = \CP(\BQP_n)$.
Moreover, due to corollary~\ref{cor:simplicial}, 
the~rectangle covering number $\rc(\conv \CBQP_n) = O(n^3 (n+1)^2)$ is polynomial.

Now we show that the~optimization problem $\CBQP_n$ is NP-hard.
Let us consider the~NP-hard problem of finding a~clique number 
in an~undirected graph $G=(V, E)$ with $n$ vertices $V = [n]$.
The input vector $c = c(G) \in \Z^{n(n+1)/2}$ we define as follows:
\[
c_{ij} = \begin{cases}
1, & \text{if } i=j,\\
0, & \text{if } \{i,j\} \in E,\\
-2, & \text{if } \{i,j\} \not\in E.
\end{cases}
\]
It is easy to see that 
$\max\limits_{x\in \BQP_n} c^T x$
is equal to the~clique number of $G$.
But for any $x \in \BQP_n$ and $y = CP(x)$ %the~difference
\begin{equation*}
%\begin{split}
\left|c^T x - c^T y / K\right| = |c^T \M(x) / K|
  \le \frac{2}{2^{d^2(d-1)}} \le \frac{1}{2^{17}},
%\end{split}
\end{equation*}
where $d = n(n+1)/2$, $n \ge 2$.
Hence the~problem $\CBQP_n$ is not easy than the~clique number problem.
\end{example}

%%%%%%%%%%%%%%%%%%%%%%%%%%%%%%%%%%%%%%%%%%%%%%%%%%%%%%%%
% 
%   The main result
%
%%%%%%%%%%%%%%%%%%%%%%%%%%%%%%%%%%%%%%%%%%%%%%%%%%%%%%%%

\section{The Main Result}
\label{sec:main}

Let us denote by $\Faces(P)$ the~set of all faces of a~polytope $P$
without improper faces $\emptyset$ and $P$.
%If faces $F_1$ and $F_2$ are compared by inclusion $F_1 \subseteq F_2$, we will write $F_1$
The \emph{face lattice} $\Lat(P)$ of a~polytope $P$ 
is the~set $\Faces(P)$ (partially) ordered by inclusion.
The face lattice $\Lat(P)$ of a~polytope $P$ is \emph{embedded} into the~face lattice $\Lat(Q)$ of a~polytope $Q$
if there is a~map $h: \Faces(P) \to \Faces(Q)$ such that $F_1 \subseteq F_2$ i\textbf{ff} $h(F_1) \subseteq h(F_2)$
for any two faces $F_1, F_2 \in \Faces(P)$.
In such a case we will use the notation $\Lat(P) \preceq \Lat(Q)$.

In what follows we need the~following example.
Let $P\subset \R^d$ be a~simplicial polytope with $n$ vertices
and let $C(2d, n) = \conv C(2d, n, g)$ be a~cyclic polytope with $n$ vertices (see the~equation \eqref{eq:cyclic}).
Note that $C(2d, n)$ is $d$-neighborly (i.e. every $d$ vertices form a~(simplicial) face of $C(2d, n)$).
Hence $\Lat(P) \preceq \Lat(C(2d, n))$.
Moreover, for such embedding we get a~several additional properties.
The number of vertices and the~number of facets of $P$ do not exceed these numbers of $C(2d,n)$,
the~graph of $P$ is a~subgraph of the~graph of $C(2d,n)$, and so on.

Let $f$ be the~map of face lattices to $\N$.
We say that $f$ is \emph{monotone} if $f(\Lat(P)) \le f(\Lat(C(2d,n)))$ for any simplicial $d$-polytope $P$ with $n$ vertices.
It is natural to assume that combinatorial characteristics of complexity of a~polytope
must satisfy this property.
Here are just a~few examples of such functions:
the~dimension of a polytope,
the~number of its $k$-faces, the~clique number of the~graph, 
the~maximum number of vertices of a~$k$-neigh\-borly face, the~rectangle covering number.
The~monotonicity is not obvious only for the~rectangle covering, but it is proved in~\cite[Corollary 2.13]{Fiorini:2013}.
The diameter of the~graph of a~polytope is an~example of a~nonmonotone function.

In addition, when evaluating complexity of the~problem $X\subset \Z^d$, 
we should take into account its size $\ES(X, c)$ (see equation \eqref{eq:size}).
If a~(natural) function $f(X, c) = f(\Lat(\conv X), \ES(X, c))$ is monotone
in every argument, we call it a~\emph{monotone combinatorial characteristic of complexity} of a~problem $X$.
Below we restrict ourselves to the~cases where
the~size of an~input vector $c$ does not make a~significant contribution to $\ES(X, c)$
and we will write $f(X)$ instead of $f(X, c)$.

\begin{theorem}
\label{th:main}
%Let $f(X)$ is monotone characteristic of complexity of a~problem $X$.
There exist an~intractable problem $X$ and a~polynomially solvable problem $Y$ such that $f(X) \le f(Y)$
for any monotone combinatorial characteristic of complexity $f$.
%, where $f(X)$ is monotone characteristic of complexity of a~problem $X$.
\end{theorem}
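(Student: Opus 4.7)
The plan is to use the cyclic-perturbation construction of the previous section for $X$ and a suitably scaled cyclic polytope for $Y$. Take $X = \CBQP_n$, which is NP-hard and whose convex hull is a simplicial $d$-polytope of dimension $d = n(n+1)/2$ with $N = 2^n$ vertices and encoding size $\ES(X) = O(d^3)$. Because the example displayed immediately before the theorem embeds any simplicial $d$-polytope with $N$ vertices into a cyclic polytope of dimension $2d$ with the same number of vertices, this combination of ``intractable problem with extremely simple face lattice'' is exactly what the theorem requires.

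For $Y$ I take the cyclic-polytope problem $Y = C(2d, N, g)$ of \eqref{eq:cyclic}, with $g:\N\to\N$ defined by $g(i) = i + 2^{\ES(X)}$. This $g$ is strictly increasing, and since $\ES(X)$ is itself polynomial in $n$, it is polynomially computable in $d$ and $\log i$. Theorem~\ref{th:cyclic} then applies and yields a polynomial-time algorithm for $Y$. Moreover $\ES(Y) = 2d \log_2 g(N) \ge \ES(X)$ by construction, so the encoding-size coordinate is non-decreasing from $X$ to $Y$.

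It remains to check $f(X) \le f(Y)$. Since $\conv X$ is a simplicial $d$-polytope with $N$ vertices and the cyclic polytope $\conv Y$ is $d$-neighborly with $N$ vertices, the face-lattice embedding $\Lat(\conv X) \preceq \Lat(\conv Y)$ discussed in the example before the theorem applies. Combining this with $\ES(X) \le \ES(Y)$, the two-argument monotonicity of $f$ yields
\[
f(X) = f(\Lat(\conv X), \ES(X)) \le f(\Lat(\conv Y), \ES(X)) \le f(\Lat(\conv Y), \ES(Y)) = f(Y),
\]
which finishes the proof.

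The only delicate point is calibrating $g$ so that $\ES(X) \le \ES(Y)$ without losing the polynomial-time guarantee of Theorem~\ref{th:cyclic}. Because $\ES(X)$ is already polynomial in $n$, adding a constant of magnitude $2^{\ES(X)}$ to the simplest choice $g(i) = i$ keeps $g$ polynomially computable and keeps every relevant parameter polynomial in $n$, so this calibration poses no genuine obstacle. All of the real work has already been done in the cyclic-perturbation lemma and in Theorem~\ref{th:cyclic}; the theorem is essentially a direct conjunction of those two results with the $d$-neighborliness of the cyclic polytope.
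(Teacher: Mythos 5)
Your proposal follows the same architecture as the paper's proof: take a hard $0/1$ problem, apply the cyclic perturbation $\CP$ to make its convex hull simplicial while keeping the encoding size polynomial and preserving hardness, then compare it via the neighborliness embedding $\Lat(\conv X) \preceq \Lat(C(2d,|X|))$ to a cyclic polytope problem $Y$ calibrated so that $\ES(Y) \ge \ES(X)$, and finish with two-argument monotonicity. Your calibration $g(i) = i + 2^{\ES(X)}$ is a legitimate variant of the paper's condition $g(|X|) = 2^{\ES(X)}$ and causes no trouble with Theorem~\ref{th:cyclic}. The one substantive divergence is the choice of the hard problem: you take $X = \CBQP_n$, whose intractability rests on $\mathrm{P} \ne \mathrm{NP}$, whereas the paper takes $Z = \{z \mid b(z)=1\}$ for a Boolean function $b$ requiring exponential-size circuits (such $b$ exist unconditionally by counting), and sets $X = \CP(Z)$. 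Since the intended reading of ``intractable'' here is unconditional --- the abstract claims the algorithm \emph{requires} exponential time or space --- your version proves only a conditional statement. This is not a structural flaw (everything else goes through verbatim, and $\CBQP_n$ was indeed used in the paper's Section~\ref{sec:rectcover} example), but to get the theorem as stated you should replace $\CBQP_n$ by the cyclic perturbation of the solution set of a random Boolean function; you would then also need to verify, as the paper does, that hardness survives the perturbation because $|c^T z - c^T \CP(z)/K|$ is far below the integrality gap of $1$ for $c \in \{-1,1\}^d$.
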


\begin{proof}
It is well known that almost all Boolean functions require exponential-sized circuits (see, for example, \cite[Chapter 6]{Arora:2009}).
Let $b: \{0,1\}^d \to \{0,1\}$ be such (intractable) function and
\[
Z = \bigl\{z\in\{0,1\}^d \mid b(z)=1\bigr\}.
\]
For each $x\in\{0,1\}^d$ we define the~input vector $c(x)\in\{-1, 1\}^d$ as follows:
\[
c(x)_i = 2 x_i - 1, \quad i\in[d].
\]
It is easy to see that $\max\limits_{z\in Z} (c(x))^T z$ is equal to the~number of ones in $x$
for $b(x) = 1$, otherwise it is less than the~number of ones.
Thus, the~problem $Z$ is intractable even for input $c\in\{-1, 1\}^d$.
The same is true for the~cyclic perturbation $X = \CP(Z)$, since
\[
|c^T z - c^T \CP(z) / K| = |c^T \M(z) / K| \le \frac{1}{2^{d^2(d-1)}} \le \frac1{16} 
\] 
for $c\in\{-1, 1\}^d$, $z\in Z$, and $d \ge 2$.
Besides, the~encoding size $\ES(X) = O(d^3)$ is polynomial and the~polytope $\conv X$ is simplicial.

To finish the~proof we assume
\[
Y = C(2d, |X|, g),
\]
where $g: \N \to \N$ is a~polynomially countable monotone function with $g(|X|) = 2^{\ES(X)}$.
Thus, the~problem $Y$ is polinomially solvable by theorem \ref{th:cyclic}.
\end{proof}

Let us assume that the complexity of some algorithm $A$ for solving a problem $X$
is a monotone combinatorial characteristic of complexity of $X$.
The theorem asserts that such an algorithm requires exponential (time or space) complexity
for solving a problem $C(2d, |X|, g)$, but this problem is polynomially solvable by theorem~\ref{th:cyclic}.
For example, the (polynomial of) rectangle covering number can not be a characteristic of complexity of such algorithm,
since $\rc(C(2d, |X|, g))$ is polynomial.
But it may be considered as a lower bound.
On the other hand, the clique number $\omega(C(2d, |X|, g)) = |X|$ is exponential in $\log |X|$.
However, this characteristic violates more strict condition of monotonicity: 
the implication $\Lat(X) \preceq \Lat(Y)$ $\Rightarrow$ $f(X) \le f(Y)$ should be true not only for $Y = C(2d, |X|, g)$.
The appropriate example is provided by theorem~\ref{th:simplex}.

\section{Concluding Remarks}  

All the~mentioned above combinatorial characteristics of complexity 
(with the~exception of the~diameter of the~graph of a~polytope) are monotone. 
We may ask the~natural question: is the~monotonicity a~necessary condition?
More precisely, are there an~intractable problem $X$ and a polynomially solvable problem $Y$ 
with the~same combinatorial structure and polynomially comparable encoding sizes $\ES(X)$ and $\ES(Y)$?
This question is reduced to the~following one.
Is it true that one of the~projections of a~cyclic polytope $C(2d, |X|, g)$ is combinatorially
isomorphic to a~simplicial polytope $X$ in theorem~\ref{th:main}?

\section*{Acknowledgments}

The author is grateful to Volker Kaibel, G\"unter~M.~Ziegler, Matthias Walter, and
Francisco Santos for helpful discussions.

%%%%%%%%%%%%%%%%%%%%%%%%%%%%%%%%%%%%%%%%%%%%%%%%%%%%%%%%
%
%   Bibliography
%
%%%%%%%%%%%%%%%%%%%%%%%%%%%%%%%%%%%%%%%%%%%%%%%%%%%%%%%%

\end{document}